\documentclass[sigconf]{acmart}
\AtBeginDocument{%
  }

\copyrightyear{2026}
\acmYear{2026}
\setcopyright{cc}
\setcctype{by}
\acmConference[KDD 2026] {Proceedings of the 32nd ACM SIGKDD Conference on Knowledge Discovery and Data Mining V.1}{August 9--13, 2025}{Jeju Island, Republic of Korea.}
\acmBooktitle{Proceedings of the 32nd ACM SIGKDD Conference on Knowledge Discovery and Data Mining V.1 (KDD 2026), August 9--13, 2025, Jeju Island, Republic of Korea}
\acmISBN{979-8-4007-2258-5/2026/08}
\acmDOI{10.1145/3770854.3783925}

\settopmatter{printacmref=true}



\usepackage[ruled,vlined,noend]{algorithm2e}
\usepackage{enumitem}
\usepackage{multirow}
\usepackage{booktabs}
\usepackage{xspace}
\newcommand{\method}{MTGenRec\xspace}
\newcommand{\stitle}[1]{\vspace*{0.4em}\noindent{\bf #1\/}}
\usepackage{pifont}
\newcommand{\squishlist}{
	\begin{list}{$\bullet$}
		{ \setlength{\itemsep}{1pt}
			\setlength{\parsep}{1pt}
			\setlength{\topsep}{2.5pt}
			\setlength{\partopsep}{0.5pt}
			\setlength{\leftmargin}{1em}
			\setlength{\labelwidth}{1em}
			\setlength{\labelsep}{0.6em}
		}
	}
	\newcommand{\squishend}{
	\end{list}
}
\usepackage{graphicx}
\usepackage{fancyhdr}
\usepackage{balance}

\begin{document}

\title{MTGenRec: An Efficient Distributed Training System for \\ Generative Recommendation Models in Meituan}

\author{Yuxiang Wang}
\authornote{Both authors contributed equally to this research.}
\affiliation{%
  \institution{Wuhan University}
  \department{School of Computer Science}
  \city{Wuhan}
  \country{China}
}
\email{nai.yxwang@whu.edu.cn}

\author{Chi Ma}
\authornotemark[1]
\affiliation{%
  \institution{Meituan}
  \city{Beijing}
  \country{China}
}
\email{machi04@meituan.com}

\author{Xiao Yan}
\authornote{Corresponding authors.}
\affiliation{%
  \institution{Wuhan University}
  \department{Institute for Math and AI}
  \city{Wuhan}
  \country{China}
}
\email{yanxiaosunny@whu.edu.cn}

\author{Mincong Huang}
\author{Xiaoguang Li}
\affiliation{%
  \institution{Meituan}
  \city{Beijing}
  \country{China}
}
\email{huangmincong@meituan.com}
\email{lixiaoguang12@meituan.com}

\author{Ruidong Han}
\author{Bin Yin}
\affiliation{%
  \institution{Meituan}
  \city{Beijing}
  \country{China}
}
\email{hanruidong@meituan.com}
\email{yinbin05@meituan.com}

\author{Shangyu Chen}
\author{Xiang Li}
\affiliation{%
  \institution{Meituan}
  \city{Beijing}
  \country{China}
}
\email{chenshangyu03@meituan.com}
\email{lixiang245@meituan.com}

\author{Fei Jiang}
\author{Lei Yu}
\affiliation{%
  \institution{Meituan}
  \city{Beijing}
  \country{China}
}
\email{jiangfei05@meituan.com}
\email{yulei37@meituan.com}

\author{Chuan Liu}
\author{Wei Lin}
\affiliation{%
  \institution{Meituan}
  \city{Beijing}
  \country{China}
}
\email{liuchuan11@meituan.com}
\email{linwei31@meituan.com}

\author{Haowei Han}
\author{Xiaokai Zhou}
\affiliation{%
  \institution{Wuhan University}
  \department{School of Computer Science}
  \city{Wuhan}
  \country{China}
}
\email{haowei.han@whu.edu.cn}
\email{xiaokaizhou@whu.edu.cn}

\author{Bo Du}
\affiliation{%
  \institution{Wuhan University, School of Computer Science, National Engineering Research Center for Multimedia Software, Hubei Key Laboratory of Multimedia and Network Communication Engineering}
  \city{Wuhan}
  \country{China}
}
\email{dubo@whu.edu.cn}

\author{Jiawei Jiang}
\authornotemark[2]
\affiliation{%
  \institution{Wuhan University}
  \department{School of Computer Science}
  \city{Wuhan}
  \country{China}
}
\email{jiawei.jiang@whu.edu.cn}

\renewcommand{\shortauthors}{Yuxiang Wang et al.}

\begin{abstract}
Recommendation is crucial for both user experience and company revenue in Meituan as a leading lifestyle company, and generative recommendation models (GRMs) are shown to produce quality recommendations recently. However, existing systems are limited by insufficient functionality support and inefficient implementations for training GRMs in industrial scenarios. As such, we introduce \textbf{\method} as an efficient and scalable system for GRM training. Specifically, to handle real-time insertions/deletions of sparse embeddings, \method employs dynamic hash tables to replace static ones. To improve training efficiency, \method conducts dynamic sequence balancing to address the computation load imbalances among GPUs and adopts feature ID deduplication alongside automatic table merging to accelerate embedding lookup. Extensive experiments show that \method improves training throughput by $1.6 \times$ -- $2.4\times$ while
achieving good scalability when running over 100 GPUs. \method has been deployed for many applications in Meituan and is now handling hundreds of millions of requests on a daily basis. On the delivery platform, we observe a 1.22\% growth in user order volume and a 1.31\% enhancement in online PV\_CTR. 
\end{abstract}

\begin{CCSXML}
<ccs2012>
   <concept>
       <concept_id>10002951.10003317.10003347.10003350</concept_id>
       <concept_desc>Information systems~Recommender systems</concept_desc>
       <concept_significance>500</concept_significance>
       </concept>
   <concept>
       <concept_id>10003752.10003753.10003761.10003763</concept_id>
       <concept_desc>Theory of computation~Distributed computing models</concept_desc>
       <concept_significance>300</concept_significance>
       </concept>
 </ccs2012>
\end{CCSXML}

\ccsdesc[500]{Information systems~Recommender systems}
\ccsdesc[500]{Theory of computation~Distributed computing models}

\keywords{Recommendation Systems; Distributed Model Training}


\maketitle

\begin{figure}[t]
    \centering
    \includegraphics[scale=0.7]{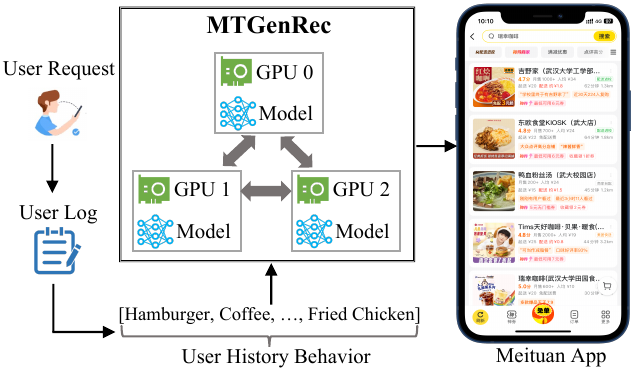}
    \caption{Recommendation workflow in Meituan.}
    \label{fig:application workflow}
\end{figure}

\section{Introduction}
As a leading e-commerce platform for lifestyle services, Meituan spans multiple business verticals such as delivery, travel, and health. In 2024, Meituan served over 770 million transacting users and processed more than 98 million orders at the daily peak. As shown in Figure~\ref{fig:application workflow},  recommendation systems~\cite{wang2017deep,zhou2018deep,naumov2019deep,jiang2019xdl,zhang2022picasso} are at the core of our business by effectively filtering history logs and personalizing services for users. High-quality recommendations are crucial to sustain high customer retention rates and enhance company revenue. Currently, our recommendation model utilizes terabytes of user action data and item metadata for training on a daily basis and produces billions of predictions for various services.

Inspired by generative language models~\cite{achiam2023gpt,radford2018improving,touvron2023llama,vaswani2017attention}, generative recommendation models (GRMs)~\cite{zhai2024actions,deng2025onerec,wang2023generative,yan2025unlocking,ji2024genrec,huang2025sessionrec,xu2025efficient} have gained significant attention from both academia and industry. Specifically, a GRM consists of two modules, i.e., \textit{sparse embedding tables} that map discrete categorical features to continuous embedding vectors, and a \textit{Transformer-based dense model} that predicts the next user action by modeling the action sequences of users. In contrast, earlier deep recommendation models (DRMs)~\cite{wang2017deep,zhou2018deep,naumov2019deep,xu2018deep} utilize multilayer perceptrons to capture the interactions between users and items. As shown in Figure~\ref{fig:beta fig}, GRM achieves superior accuracy compared with DRM, primarily due to its ability to model the entire action sequence of each user with self-attention. However, GRM has greater computation complexity due to its attention mechanism, which scales quadratically with sequence length, while DRM scales only linearly. 
Thus, to enjoy the good accuracy of GRMs and keep up with the high speed of training data generation, Meituan requires an efficient and scalable system for training GRMs.

TorchRec~\cite{ivchenko2022torchrec} is a state-of-the-art PyTorch-based system for training GRMs. It enables efficient computation of attention scores~\cite{dao2022flashattention} and enhances scalability through hybrid parallel strategies, including data and model parallelisms~\cite{shoeybi2019megatron,rasley2020deepspeed,huang2019gpipe,narayanan2019pipedream}.
To leverage TorchRec's benefits, we further develop our \method based on it.
However, we identify two fundamental limitations in TorchRec that hinder its application in industrial recommendation systems.
In particular, 
\ding{182} when TorchRec handles the sparse embedding tables of GRMs, its built-in static embedding tables cannot process dynamic embedding entries and table capacity expansion. Specifically, additional embeddings cannot be allocated to new users and items (e.g., merchants updating menus) in real-time from fixed-size static tables. Although default embeddings can be used in this case, model accuracy will be degraded. 
Moreover, static tables typically require pre-allocating more capacity than necessary to prevent overflow, leading to memory underutilization. \ding{183} For the embedding lookup process, TorchRec transmits same embedding entries multiple times due to duplicate embedding IDs in the user sequences, leading to a long communication time. Moreover, TorchRec requires labor-intensive manual configuration to merge embedding tables, which is a common and effective operation for accelerating lookups. The two limitations severely constrain the practical deployment of GRMs and compromise training efficiency.

\begin{figure}[t]
    \centering
    \includegraphics[scale=0.35]{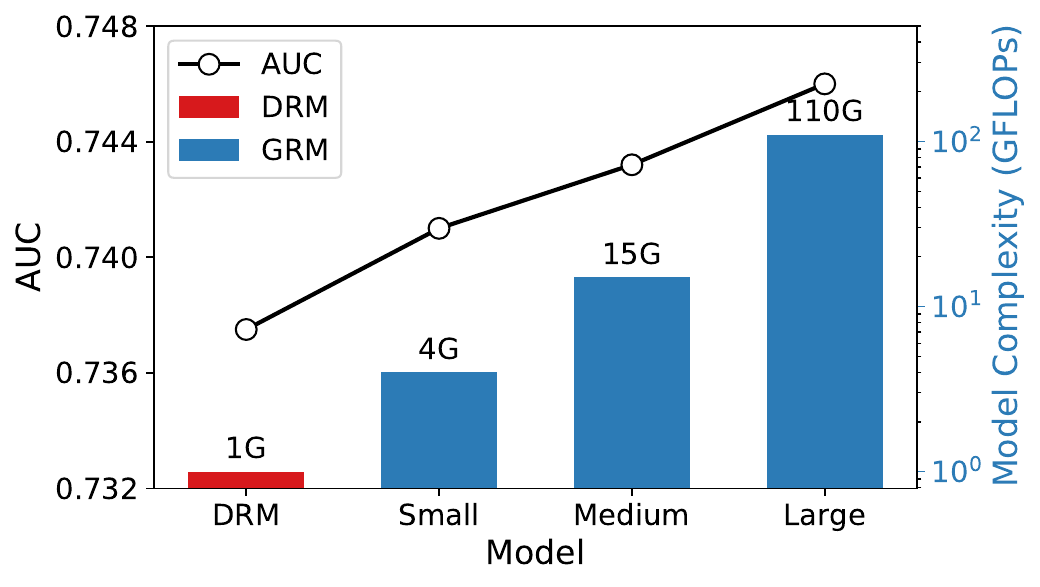}
    \caption{AUC (for accuracy, higher the better) and model complexity for DRM and GRM. Notably, an accuracy improvement of even 0.1\% is crucial for practical recommendation.}
    \label{fig:beta fig}
\end{figure}

To address these challenges, we propose a \textit{hash-based embedding table} for changeable IDs and \textit{two-stage ID deduplication} for lookup acceleration. The hash-based embedding table eliminates fixed-capacity constraints by mapping arbitrary feature IDs to embedding indices via a hash function.
We decouple the storage of keys (i.e., embedding index) and values (i.e., embedding vector).
This design prioritizes the expansion of the smaller key list over the larger value list, thereby enhancing the efficiency of table expansion.
Additionally, we adopt a two-stage ID deduplication operation to eliminate repetitive IDs before and after ID communication, thus avoiding the repetitive transmission of embedding vectors among devices. We also design a table configuration interface for automatic table merging, which combines multiple embedding tables into one according to defined patterns (e.g., the same embedding dimension).
This technique reduces the number of embedding lookup operators, and thus accelerating the process.
Besides the sparse embeddings, we observe that variable-length sequences cause significant load imbalances among GPUs. Since retaining complete sequences is crucial for GRM's accuracy, we cannot truncate or pad sequences as is done in LLMs. Instead, we propose a lightweight \textit{dynamic sequence batching} technique that dynamically adjusts batch sizes to ensure a more balanced workload across GPUs.

\begin{figure}[t]
    \centering
    \includegraphics[scale=0.4]{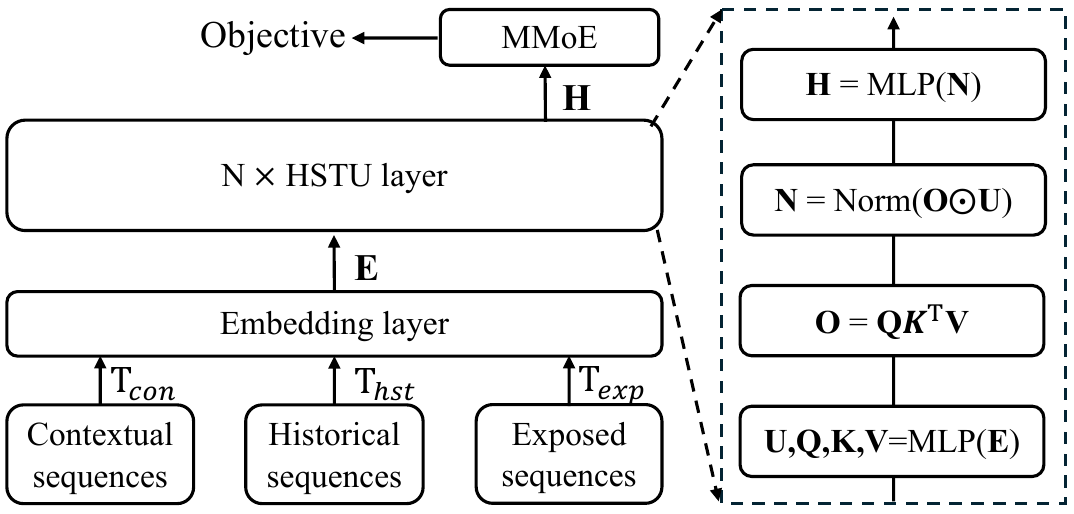}
    \caption{Model architecture of Meituan's GRM.}
    \label{fig:MTGR model}
\end{figure}

We evaluate \method and its designs using 200 million real user sequences generated over a week at Meituan. Scalability studies show that \method achieves close to linear speedup in training throughput when using more GPUs. Compared with TorchRec, \method reduces training time by $2.44 \times$. Ablation studies show that our designs are effective. In particular, dynamic sequence balancing achieves near-optimal load balance and improves system throughput by 1.75$\times$ compared to the baseline. ID deduplication significantly reduce network communication, yielding an improvement of 53\% in throughput.
Compared to the original DRM, GRM maintains similar training costs with \method, while achieving a 2.88pp increase in offline CTCVR GAUC, a 1.22\% growth in take-away orders, and a 1.31\% enhancement in PV\_CTR.

To summarize, we make the following contributions.

\squishlist
    \item We present \method as an efficient and scalable system for training generative recommendation models (GRMs) at Meituan.
    \item To address the challenges caused by GRM's sparse and dense models, we introduce dynamic embedding tables and ID deduplication for changeable feature IDs and lookup acceleration, and dynamic sequence batching for computation load balancing.
    \item We conduct extensive experiments to evaluate \method. The results show that \method is efficient and scalable, and our designs are effective in improving system efficiency.
\squishend

\section{\method System Overview}

\stitle{Model Architecture.} Figure~\ref{fig:MTGR model} illustrates the model architecture of the GRM in Meituan, which consists of four core components: input sequence, embedding layer, hierarchical sequential transduction units~\cite{zhai2024actions} (HSTU) layer, and multi-gate mixture-of-experts~\cite{ma2018modeling} (MMoE) layer. In particular, the input sequence $\mathbf{T}=[\mathrm{T}_{con},\mathrm{T}_{hst},\mathrm{T}_{exp}]$ includes contextual sequences $\mathrm{T}_{con}$ (i.e., user features), historical sequences $\mathrm{T}_{hst}$ (e.g., click and purchase actions), and exposition sequences $\mathrm{T}_{exp}$ (i.e., candidate items). Since the model cannot directly process discrete category data, the embedding layer $\phi_{emb}$ is used to convert the feature IDs into continuous embedding vectors $\mathbf{E}=\phi_{emb}(\mathbf{T}) \in \mathbb{R}^{F\times d}$, where $F$ and $d$ denote the number of features and embedding dimension, respectively.

Subsequently, we employ multiple HSTU layers -- a self-attention based Transformer~\cite{vaswani2017attention} architecture -- to learn user sequence embeddings from item interactions. As shown in Figure~\ref{fig:MTGR model}, each HSTU is equipped with four attention-based sub-layers for user feature extraction $\mathbf{U}$, item feature extraction $\mathbf{O}$, and feature interaction $\mathbf{H}$. The computation process is expressed as follows:
\begin{gather}
    \mathbf{U, Q, K, V} = \mathrm{Split}(\phi_{1}(\mathrm{MLP}(\mathbf{E}))), \\
    \mathbf{O}=\phi_{2}(\mathbf{Q}\mathbf{k}^T)\mathbf{V}, \\
    \mathbf{H}=\mathrm{MLP}(\mathrm{Norm}(\mathbf{O}\odot \mathbf{U})),
\end{gather}
where $\phi_1$ and $\phi_2$ are the SiLU~\cite{elfwing2018sigmoid} activation function. 
MMoE directs embeddings \(\mathbf{H}\) from the HSTU layer to multiple expert models. Each expert model is equipped with a gating network that learns the model's weights. We finally aggregate the output embeddings of the top-$k$ expert models. That is:
\begin{equation}
    y=\sum_{i=1}^{k}g_i(\mathbf{H})\cdot \mathrm{Expert}_i(\mathbf{H}).
\end{equation}
We select the cross entropy loss as the training objective.

\stitle{Distributed Training.} \method facilitates the distributed training of GRMs across multiple devices. 
We follow TorchRec's design and employ a hybrid parallel strategy: model parallelism for sparse models and data parallelism for dense models.
This design is driven by two considerations. First, data parallelism is impractical for the large embedding table, as no single device can store the entire embedding table. Second, dense models are relatively small, making data parallelism more suitable.
\method is trained end-to-end with synchronous execution to ensure model quality and accuracy. 
To further enhance training efficiency, we introduce a pipeline that preprocesses the next data batch concurrently with the model computation.
We detail the pipeline design in Appendix~\ref{adx:pipeline}.
Next, We provide an introduction to the workflow of \method as follow.

\begin{figure}[t]
    \centering
    \includegraphics[scale=0.4]{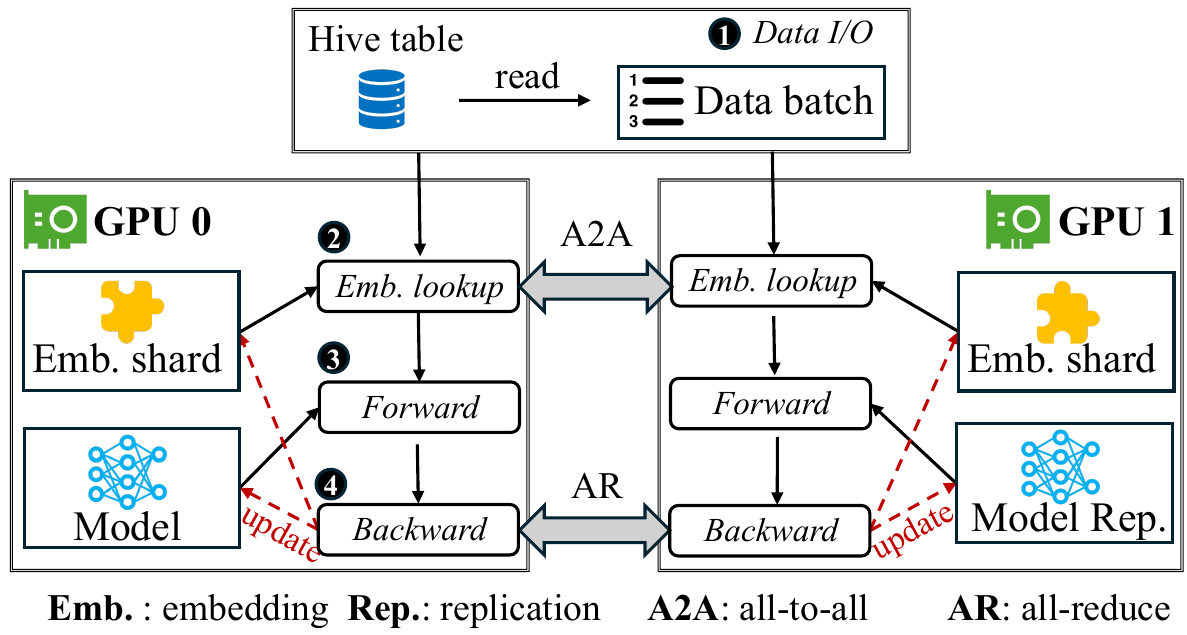}
    \caption{The workflow of \method for GRM training.}
    \label{fig:system workflow}
\end{figure}

\squishlist
\item \stitle{Data I/O.} As shown in Figure~\ref{fig:system workflow}, the training data is stored in partitioned Hive tables on HDFS, which utilizes a columnar storage format to optimize access speed and storage efficiency. Unlike DRM that pairs a user with a single item in a data sample, GRM uses a sequence-wise approach that pairs a user with multiple items. This design avoids the repeated computation of a user feature.
We provide an explanation in Appendix~\ref{apd:different from GRM and DRM}.

\item \stitle{Embedding Lookup.}
The embedding lookup uses All-to-all communication for cross-device embedding exchange. In particular, this process involves two data exchanges: First, devices transmit the required feature IDs and receive corresponding IDs from peer devices. Subsequently, a collection operation facilitates the exchange of retrieved embeddings between devices.

\item \stitle{Forward Computation.} After embedding lookup, we encode the input sequence through the dense model (i.e., HSTU and MMoE layers) and then calculate the loss function. Model parameters across all devices are consistently initialized by setting the same hyperparameter and random seed.

\item \stitle{Backward Update.} \method employs a heterogeneous parameter update strategy in the backward propagation. Specifically, sparse embedding tables aggregate gradients through model parallelism, where each device updates its locally stored embedding shards. Dense parameters employ All-Reduce communication for gradient synchronization before parameter update.
\squishend

\section{Core Designs for \method}

\subsection{Dynamic Embedding Table}
In a real production environment, the recommendation system must handle real-time insertions and deletions of sparse embedding entries (e.g., merchants adding or removing products). 
However, TorchRec cannot handle dynamic entries due to its static embedding tables (table capacity is fixed), and it typically uses default embeddings for feature IDs that exceed the table capacity. However, this method may lead to model accuracy degradation. Additionally, static tables require over-provisioning of initial capacity to anticipate potential expansion requirements, which results in inefficient memory utilization. These limitations motivate \method's implementation of hash-based embedding tables.

\stitle{Storage Layout.} 
\method employs a decoupled hash table architecture to separate keys and values into distinct structures. As shown in Figure~\ref{fig:hash table overview} (a), the lightweight \textit{key structure} maintains a mapping table of keys and pointers to embeddings, while the \textit{embedding structure} stores embeddings and auxiliary data (e.g., timestamps). This design allows for efficient table expansion by prioritizing the expansion of the smaller key structure over the larger embedding structure.
Additionally, we implement chunk-based allocation for the embedding structure. Specifically, multiple embeddings are stored within a single chunk, and insertions/deletions are managed on a per-chunk basis.
This approach enhances efficiency, as it requires only a single chunk allocation or deletion operation, rather than multiple individual operations.

\stitle{Hash Function and Collision Handling.} \method selects MurmurHash3~\cite{murmurhash} as the hash function to determine key indices due to its efficient hash value calculation. MurmurHash3 processes input IDs in 4-byte blocks through mixing operations, which maximize entropy and ensure avalanche effects from single-bit changes. The multi-stage mixing generates uniformly distributed hash values through iterative nonlinear transformations.
Hash collisions occur when different keys are mapped to the same index by a hash function.  Common techniques for handling collisions include chaining and open addressing. We employ open addressing due to its memory efficiency advantages over chaining. However, existing probing techniques are vulnerable to clustering under high load factors~\cite{bender2022linear}.

\begin{figure}
    \centering
    \includegraphics[scale=0.29]{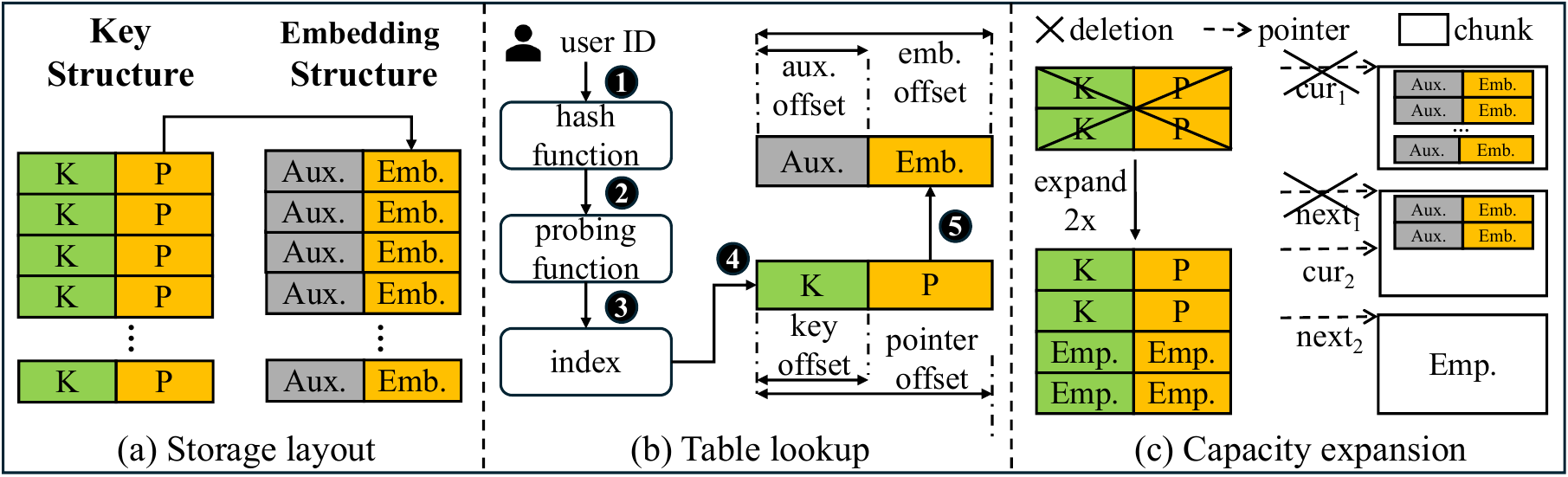}
    \caption{Dynamic embedding table in \method.}
    \label{fig:hash table overview}
\end{figure}

To address this issue, we propose a novel grouped parallel probing technique. This approach first ensures that the base step size $S$ is an odd number by performing a bitwise OR operation with 1, guaranteeing that all positions in the table can be accessed during probing. Next, we associate the step size with the key to prevent clustering caused by overlapping probing sequences of different keys. Finally, we multiply the step size by the thread group number, enabling distinct thread groups to probe different regions of the table. Formally, the computation is expressed as follows:

\begin{equation}
    S  = (k\%(M/\mathsf{threads}-1)+1 \mid  1)*\mathsf{threads},
\end{equation}
where $k$ refers to the key's value, $M$ represents the hash table size, $\mathsf{threads}$ indicates the number of thread groups, and $\mid$ denotes the bitwise OR operation. 
We use coprime conditions to prove that our grouped parallel probing technique can traverse all slots in the hash table. The proof is provided in Appendix~\ref{app: prrof}.

\begin{theorem}
    For a hash table of size $M=2^n$  and an odd probing step $S$, the probe sequence $\{h_{t}\}_{t=0}^{M\!-\!1}, h_{t}\!=\!(h_0\!+\!tS)\%M$ covers all $M$ distinct slots if and only if $S$ is odd. Formally:
        \begin{equation}
            \forall i,j\in\left [ 0, M-1 \right ] ,i\neq j \Rightarrow h_i\neq h_{j}.
        \end{equation}
\end{theorem}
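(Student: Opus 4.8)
The plan is to reduce the claim to the standard fact that, modulo $M$, the map $t \mapsto tS$ is a bijection if and only if $\gcd(S,M)=1$, and then to observe that for $M=2^{n}$ this coprimality is equivalent to $S$ being odd. First I would unwind the definitions: for $i,j\in[0,M-1]$, the equality $h_i=h_j$ means $h_0+iS\equiv h_0+jS\pmod M$, i.e. $M\mid (i-j)S$. So the displayed claim ``$\forall i\neq j,\ h_i\neq h_j$'' is equivalent to the implication that $M\mid (i-j)S$ together with $|i-j|\le M-1$ forces $i=j$; and ``the probe sequence covers all $M$ slots'' then follows from this injectivity by pigeonhole, since the $M$ values $h_0,h_1,\dots,h_{M-1}$ lie distinctly in the $M$-element set $\{0,\dots,M-1\}$.

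For the ``if'' direction, suppose $S$ is odd. Since the only prime dividing $M=2^{n}$ is $2$, we have $\gcd(S,2^{n})=1$. By Bézout there exist integers $a,b$ with $aS+b2^{n}=1$; multiplying the divisibility $M\mid (i-j)S$ through by $a$ gives $M\mid (i-j)(1-b2^{n})$, hence $M\mid (i-j)$. Combined with $0\le |i-j|\le M-1$ this forces $i=j$, which proves injectivity of $t\mapsto h_t$ on $\{0,\dots,M-1\}$ and therefore full coverage.

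For the ``only if'' direction I would argue by contraposition: if $S$ is even, write $S=2S'$; then $iS$ is even for every $i$, so $h_i\equiv h_0\pmod 2$ for all $i$. Thus every probe position has the same parity as $h_0$, and the sequence is confined to one of the two parity classes of $\{0,\dots,M-1\}$, each of which contains only $2^{n-1}<M$ slots; hence not all slots are covered. Taking the contrapositive gives that full coverage implies $S$ odd.

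The main obstacle here is bookkeeping rather than depth: one must be careful to (i) make explicit the equivalence between ``covers all slots'' and injectivity of $t\mapsto h_t$ on the finite range $\{0,\dots,M-1\}$, which needs the counting/pigeonhole step and not merely pairwise distinctness, and (ii) invoke the coprimality $\gcd(S,2^{n})=1$ explicitly so that Euclid's lemma / Bézout applies — this is precisely the ``coprime condition'' the paper alludes to. If a more general statement is wanted in the appendix, the clean lemma to isolate is that the orbit $\{\,(h_0+tS)\%M : t\ge 0\,\}$ has size $M/\gcd(S,M)$, from which both directions follow immediately.
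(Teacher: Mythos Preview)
Your proof of the ``if'' direction is essentially identical to the paper's: both establish $\gcd(S,2^{n})=1$ from the observation that an odd $S$ shares no factor of $2$ with $M=2^{n}$, invoke B\'ezout to cancel $S$ from $(i-j)S\equiv 0\pmod M$, and then finish with pigeonhole to pass from injectivity to full coverage. You in fact go further than the paper by also supplying the ``only if'' direction (even $S$ confines all probes to a single parity class of size $2^{n-1}$), which the appendix omits despite the theorem being stated as a biconditional.
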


\stitle{Table Lookup.} With the hash function and probing method in place, we can now implement dynamic table lookup. As shown in Figure~\ref{fig:hash table overview} (b), a hash value is computed for a given user ID, and probing identifies the index corresponding to this ID within the hash table (steps 1-3). Using the starting address of the key structure in memory, along with the index and pointer offset, we determine the embedding pointer $p$ for the user ID (step 4):
\begin{equation}
    p \!=\! st\_add \!+\! index * row\_offset\!+\!pointer\_offset,
\end{equation}
where $st\_add$ denotes the start address of the key structure.
Finally, we use this pointer, along with the embedding offset in the embedding structure, to extract the embedding vector (step 5).

\stitle{Capacity Expansion.} 
When the slots in the hash table approach full capacity, expansion becomes necessary.
\method triggers this expansion when the slot occupancy reaches critical levels (i.e., load factor > 0.75). As shown in Figure~\ref{fig:hash table overview} (c), the capacity increases in a power-of-two progression, doubling iteratively to meet the required capacity. The migration process transfers the original key structure to the expanded version, and the legacy memory is deallocated. Notably, our design prioritizes expanding the key structure while maintaining the chunk (i.e., embedding structure) capacity, thus optimizing expansion efficiency by avoiding bulk transfers of embedding data.
To manage the expansion of the embedding structure, \method maintains two chunks: a current chunk and a next chunk. When the current chunk's remaining capacity is insufficient for new embeddings, they are stored in the next chunk. Concurrently, we retire the filled chunk and allocate a new chunk as the next chunk to preserve the dual-chunk configuration. This pre-allocation mechanism ensures immediate storage availability for new embeddings during the expansion process.

\subsection{Embedding Table Merging}
Industrial recommendation systems usually merge multiple embedding tables into a unified table, which improves efficiency by fusing multiple lookup operations into one~\cite{jiang2019xdl,zhang2022picasso}.
However, the merging process requires developers to manually configure each embedding table, which becomes labor-intensive for massive tables. To address this issue, we propose an automated table merging technique.

\stitle{Automated Merging Table.}
We design a unified feature configuration interface, \texttt{FeatureConfig}, which enables automated table merging by defining parameters for each feature (e.g., \textit{feature name, embedding dimensions, and lookup tables}). \method then generates merging strategies, such as combining tables with identical embedding dimensions. To support dynamic embedding table merging, we introduce a \texttt{HashTableCollection} that uses feature configurations and performs pooling computations as needed. Developers only need to specify the required features, and \method automatically registers them and executes the table merging without requiring manual intervention. 

\stitle{Previous Solution.} Table merging can accelerate the lookup process, however, it introduces numerical overlap in the ID spaces of different tables. For example, user IDs and item IDs might share the same value in the merged table. TorchRec~\cite{ivchenko2022torchrec} addresses this issue using a row-wise offset mechanism for static table merging. In particular, each embedding table is assigned a unique row offset, which is calculated by summing the number of rows from all preceding tables. This offset is then added to the original ID to generate a globally unique ID, which is used to retrieve the corresponding embedding from the merged embedding table. 

\begin{figure}[t]
    \centering
    \includegraphics[scale=0.45]{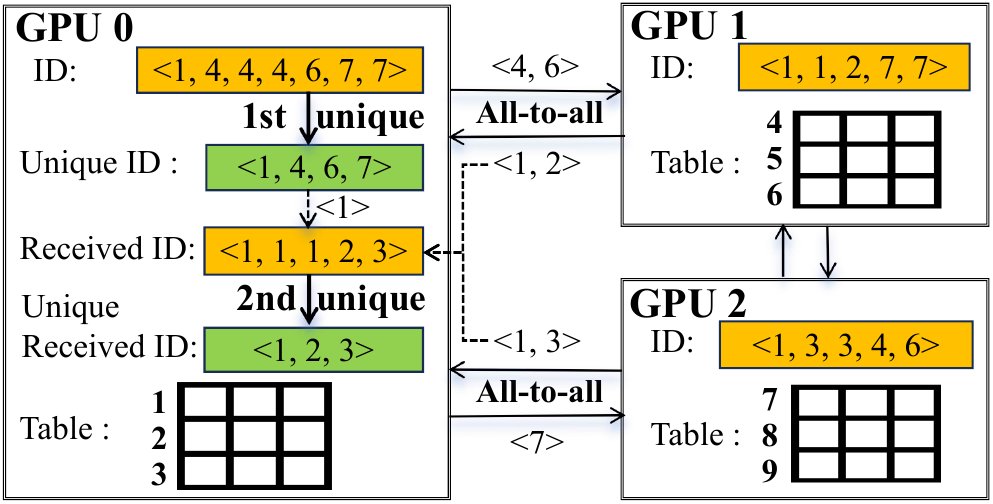}
    \caption{A running example of two-stage ID deduplication. We only visualize the ID deduplication process on GPU 0 and omit the embedding communication.}
    \label{fig: ID unique}
\end{figure}

\stitle{Our Solution.} Dynamic table merging also faces ID overlap issues. However, the row counts of dynamic tables are unpredictable in advance, which precludes the use of fixed offsets. To resolve this, we dynamically calculate offsets based on the number of tables to prevent ID overflow. Specifically, we use bitwise operations to combine the original ID with a feature table identifier to create a globally unique ID. We first compute the number of bits needed for encoding feature table indices as \( k = \lceil \log_2(m+1) \rceil \), using the high \( k \) bits of the 64-bit integer space as feature identifier bits. The highest bit is set to 0 to ensure the offset is a positive integer, while the remaining \( (64-1-k) \) bits are used to determine the maximum number of rows in the hash table.
For example, if we use 2 identifier bits to distinguish between 3 tables (since \(  \lceil \log_2(3+1) \rceil =2\)), the remaining 61 bits (64-1-2) are used to calculate the maximum row capacity. Consequently, the offset values for ``Table 2'' and ``Table 3'' are configured as $2^{59}$ and $2^{60}$, respectively, derived through
successive halving of the maximum row capacity.
The calculation formula of the ID $x$ in the \( i \)-th feature table is expressed as follows:
\begin{equation}
    \mathrm{ID}=(i \ll (63-k))\mid x ,
\end{equation}
where $\ll$ denotes shift left operation, $\mid$ is bitwise OR. A detailed running case is provided in Appendix~\ref{apd:table offset} for better understanding.

\subsection{Embedding Table Lookup Acceleration}
\label{sec:Lookup Acceleration}
Embedding tables are typically divided into shards distributed across devices. Each table lookup involves two all-to-all collective communications: one for ID exchange and another for embedding vector exchange. The communication cost is directly proportional to the number of features.
However, a sequence batch may contain numerous duplicate features IDs, and wasted bandwidth can render the lookup communication as performance bottlenecks.
To address this issue, we propose a two-stage ID deduplication technique aimed at reducing communication latency.

\stitle{Two-stage ID Deduplication.} As shown in Figure~\ref{fig: ID unique}, the first stage involves deduplicating feature IDs on each device before exchanging them, thereby reducing ID communication latency.
Despite this optimization, the ID exchange process can reintroduce duplicates due to different devices sending the same ID. Thus, the second deduplication stage is necessary to remove these newly generated duplicates.
ID deduplication can significantly alleviate embedding communication latency. If feature IDs are redundant, devices will transmit multiple duplicate embeddings, as they must return the corresponding embedding vectors based on the received feature IDs.
Notably, the primary objective of the two-stage ID deduplication is to reduce embedding communication latency, as ID communication has a relatively minor impact.

\begin{figure}[t]
    \centering
    \includegraphics[scale=0.35]{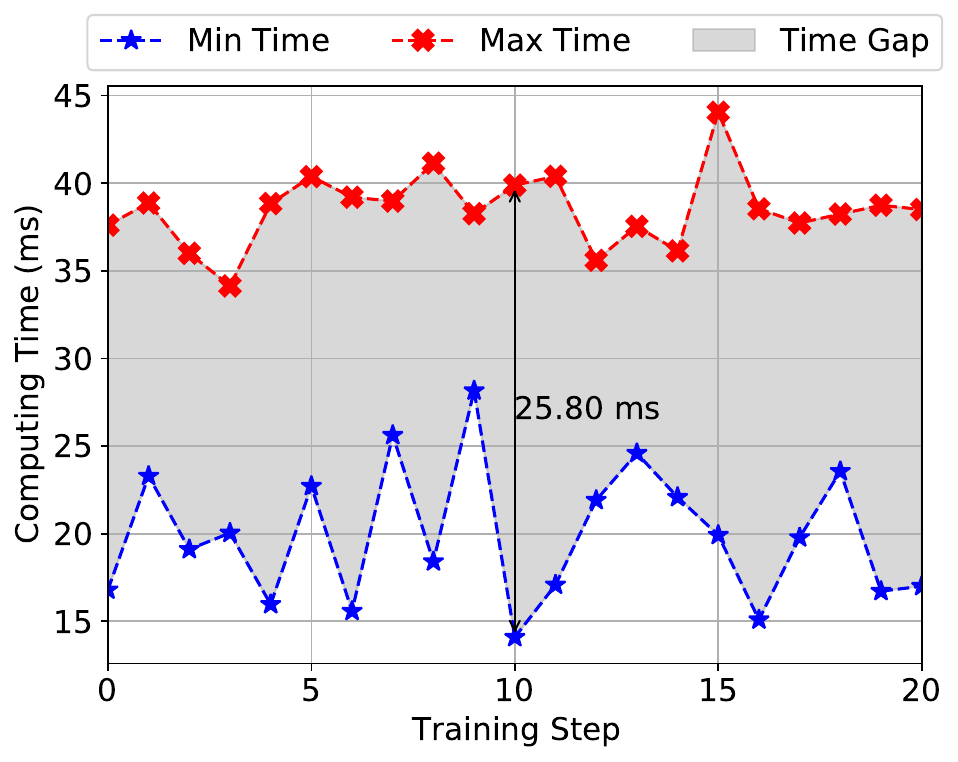}
    \caption{The visualization of imbalanced computational load. We train the \method on 8 GPUs and report the maximum and minimum GPU computation times across steps 0–20.}
    \label{fig:gpu computing time}
\end{figure}

\subsection{Sequence Balancing}
In addition to optimizing the sparse model of GRM, we observe that existing methods are still insufficient in optimizing the dense model.
User sequences exhibit a long-tail distribution: a small subset of highly active users generate exceptionally long sequences, while the most produce short ones. Previous DRMs are constrained by architectural limitations that enforce sequence truncation and padding for length alignment. In contrast, GRMs require preserving complete user sequences to achieve superior accuracy. This advantage arises because truncation risks removing semantically critical tokens, even when separated by many intermediate interactions. 

\stitle{Initial Design.} FlashAttention~\cite{dao2022flashattention} processes sequences by dividing them into chunks, where each chunk independently computes local attention. The final results are iteratively merged to cover the full sequence. This chunk-based design inherently supports variable-length sequences, motivating our initial approach of directly computing complete user sequences. 
However, this straightforward batching approach results in significant efficiency challenges. Figure~\ref{fig:gpu computing time} illustrates the GPU computation times per training step. The shaded region in the figure reveals prolonged synchronization delays (up to 25.8 ms) across devices, resulting in resource underutilization and reduced training efficiency. The primary cause is the unequal distribution of sequence lengths, which leads to a load imbalance issue among devices.

\begin{figure}[t]
    \centering
    \includegraphics[scale=0.55]{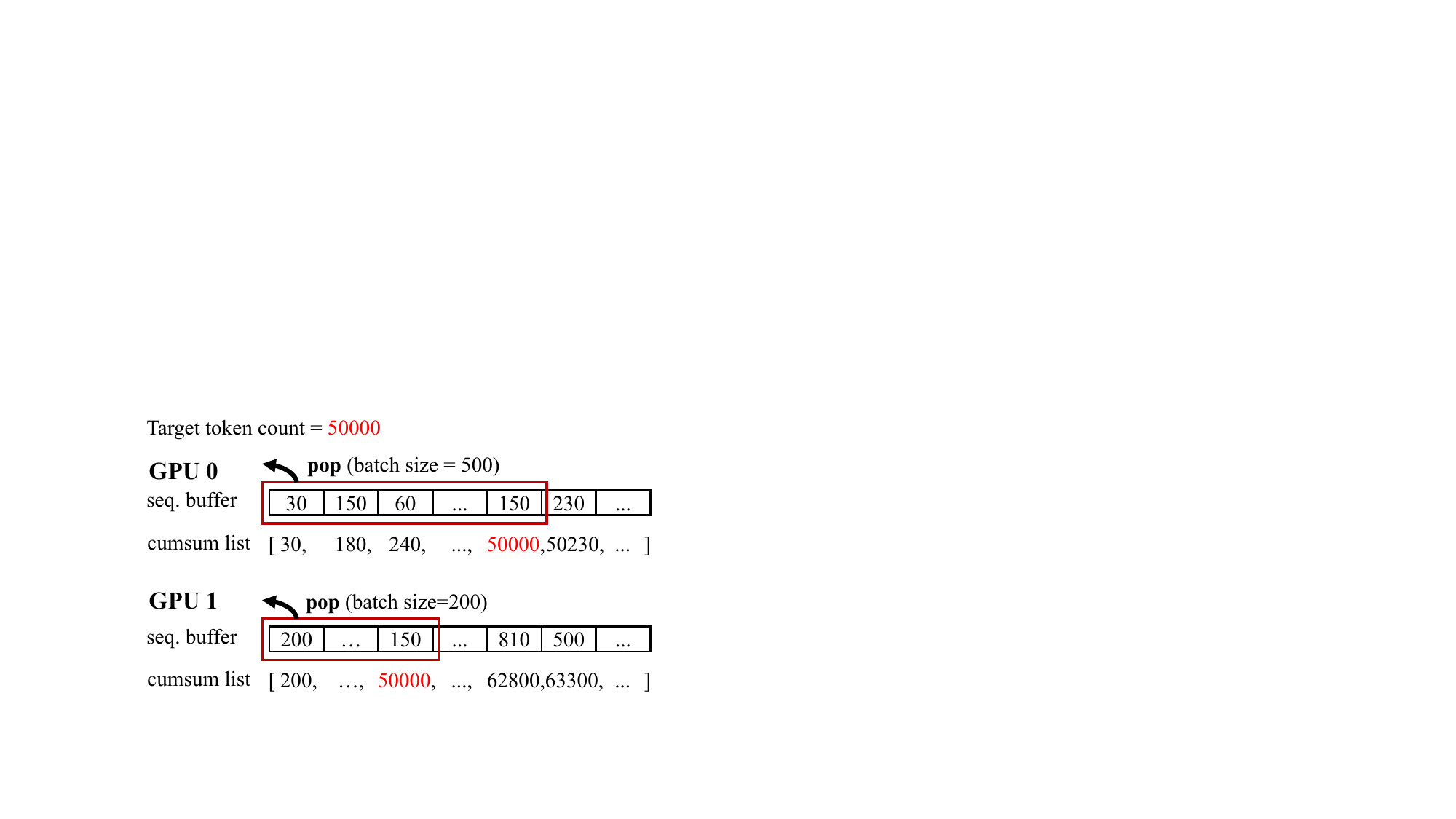}
    \caption{Running example of dynamic sequence batching. The number indicates the number of tokens in a sample.}
    \label{fig:global balance}
\end{figure}

\stitle{Dynamic Sequence Batching}.
To address the problem, as shown in Figure~\ref{fig:global balance}, we introduce a dynamic sequence batching technique. Specifically, each GPU maintains a buffer for storing sequence samples retrieved from Hive table chunks. 
We define the target token count as the product of average sequence length (i.e., 600) and the batch size.
We first compute token counts per sample in the buffer and calculate their cumulative sums. A binary search algorithm then identifies the optimal partition point where the cumulative sum most closely approaches the target token count. We finally obtain a balanced batch by outputting the first $k$ sequences in the buffer. If the total number of tokens in the buffer falls below the target, the remaining samples are merged into the subsequent buffer. This process iterates until all Hive table chunk samples are consumed.
A detailed algorithm is provided in Appendix~\ref{apd:batching algorithm}.

In data parallelism, gradients are typically averaged using All-Reduce operations. However, with dynamic sequence batching, directly averaging gradients may introduce biases, as each device computes gradients based on a varying number of samples. To address this issue, we implement All-to-all communication to synchronize batch sizes across devices, followed by performing a weighted average of gradients proportional to their respective batch sizes. This ensures numerical correctness in both loss computation and gradient update processes.

\stitle{Discussions.} Dynamic sequence batching provides two critical advantages: (1) improved device load balancing via token constraints, and (2) optimized hardware utilization via adaptive batch sizes. A fixed batch size strategy requires conservative configurations to prevent out-of-memory risks from clusters of extremely long sequences. In contrast, dynamic batching maximizes device memory utilization by loading batches near device memory limits. 
While more complex load balancing methods exist, such as sequence packing based on length distributions~\cite{krell2021efficient,kundu2024enhancing} or dynamic programming-based parallelism strategies~\cite{li2024demystifying}, we ultimately exclude these methods from our \method. This decision stems from the empirical observation that our lightweight dynamic sequence balancing achieves near-optimal workload balancing under current training conditions, rendering additional complexity unnecessary.

Besides the above optimizations for sparse and dense model, we incorporate a suite of additional techniques to implement a comprehensive GRM training system, including checkpoint resumption, mixed-precision training, gradient accumulation, and operator fusion. A detailed introduction is provided in Appendix~\ref{apd:implementations}. 

\section{Experimental Evaluations}

\subsection{Experiment Settings}
\stitle{Model Setups.} 
For GRM's dense model, we scale model architecture based on \textit{computational complexity}, resulting in GRM 4G and GRM 110G variants, where ``G'' corresponds to Giga Floating-Point Operations (GFLOPs) required per forward pass. For sparse models, we use the \textit{embedding dimension factor} to expand the embedding tables, including 1D, 8D, and 64D configurations.
The baseline 1D configuration chooses the widely adopted embedding dimensions~\cite{wang2017deep,xu2018deep,naumov2019deep}, typically ranging from 32 to 128 (with variations across different features).
Accordingly, 64D represents a 64$\times$ expansion of all embedding table dimensions relative to this 1D baseline.
The detailed model hyperparameters are provided in Appendix~\ref{apd:model complexity}.

\stitle{Baselines.} 
We compare \method with TorchRec~\cite{ivchenko2022torchrec}, the state-of-the-art GRM training framework proposed by Meta.
Actually, our initial design is developed based on TorchRec, and we further implement \method with the proposed optimizations.
We conduct experimental comparisons against TorchRec regarding both model accuracy and efficiency. 
We exclude alternative baselines as TorchRec is the only PyTorch-native training framework specifically optimized for PyTorch-based GRMs. 
Although other training systems such as TensorFlow are widely used in the industry, their framework currently cannot support GRM training.

\stitle{Dataset and Implementation.} We conduct experiments on 10 days of user logs from Meituan, with daily log data exceeding 200 million sequences. The average sequence length is 600, reaching a maximum length of 3,000. 
The detailed dataset statistics are provided in Appendix~\ref{apd:dataset}.
All models are trained from scratch using the same data.
To optimize sparse and dense features, we employ the Adam~\cite{kingma2014adam} optimizer.

\stitle{Environment.} Our experiments are conducted on a training node equipped with 8 GPUs by default. For scaling experiments, the cluster configuration extends to a maximum of 16 GPU nodes, each containing 8 GPUs. All GPUs are NVIDIA A100 SXM4 models with 80GB of memory. GPUs within a node are interconnected via NVLink with a bandwidth of 600 GB/s, while nodes are connected using Infini-Band interconnects with a bandwidth of 200 GB/s. 

\stitle{Evaluation Metrics.} We use \textit{Group AUC} (GAUC) to evaluate the accuracy of the model on CTR and CTCVR tasks and system \textit{throughput} to evaluate the training efficiency. 

\begin{figure}[t]
    \centering
    \includegraphics[scale=0.45]{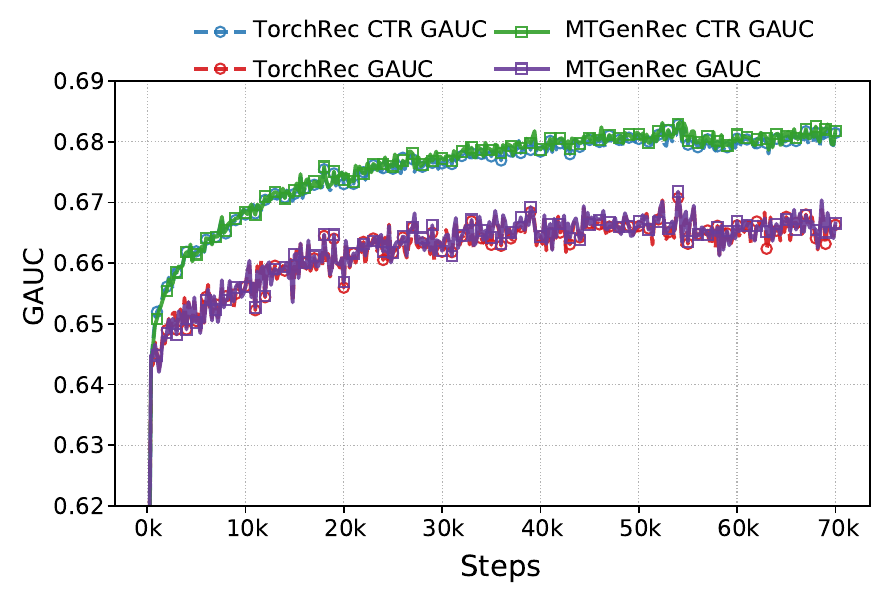}
    \caption{The GAUC results of CTR and CTCVR on GRM 4G 1D for TorchRec and \method, respectively.}
    \label{fig:exp auc}
\end{figure}

\begin{table}[t]
\caption{Comparison of offline and online performances on the Meituan Take-away application. We report the improved accuracy of GRM compared to DRM.}
\scalebox{0.85}{
    \centering
    \begin{tabular}{lcccc}
    \toprule
     & \multicolumn{2}{c}{Offline Metric diff} & \multicolumn{2}{c}{Online Metric diff} \\
    \cmidrule(lr){2-3} \cmidrule(lr){4-5}
     & CTR GAUC & CTCVR GAUC  & PV\_CTR & UV\_CTCVR \\
    \midrule
    GRM 4G & +0.0036 & +0.0154 & +1.04\% & +0.04\% \\
    GRM 110G & +0.0153 & +0.0288 & +1.90\% & +1.02\% \\
    \bottomrule
    \end{tabular}
    }
    \label{tab:online_metrics}
\end{table}

\subsection{Comparisons to Baselines}
\stitle{Accuracy and Business Benefits.} Figure~\ref{fig:exp auc} presents the CTR and CTCVR GAUC results obtained from training a GRM using \method and TorchRec. The results show that both \method and TorchRec ensures model correctness and training stability. To further validate the effectiveness of our GRM, we deploy \method on the Meituan Take-away application, conducting A/B testing with $2\%$ of the traffic. The comparison baseline is the most advanced online DRM (i.e., DLRM~\cite{naumov2019deep}), which has been continuously learning for 2 years. As shown in Table ~\ref{tab:online_metrics}, although the volume of training data is significantly lower, the offline and online metrics of GRM still greatly exceed those of the baseline.

\begin{table}[t]
\caption{Time (second) decomposition for GRM 4G 1D and GRM 110G 64D in TorchRec and \method, respectively.}
\scalebox{0.85}{
\begin{tabular}{llrrr}
\toprule
\textbf{Model}                & \textbf{Method} & \textbf{Lookup} & \textbf{Forward} & \textbf{Backward} \\
\midrule
\multirow{2}{*}{GRM 4G 1D}    & TorchRec        & 7.0 (s)             & 14.8 (s)             & 50.7 (s)              \\
                              & MTGenRec        & 3.2 (s)             & 14.0 (s)             & 32.6 (s)              \\
                              \midrule
\multirow{2}{*}{GRM 110G 64D} & TorchRec        & 95.2 (s)            & 27.8 (s)             & 177.7 (s)             \\
                              & MTGenRec        & 14.4 (s)            & 19.9 (s)             & 54.7 (s)     \\
\bottomrule
\end{tabular}
}
\label{tab:exp ablation time study}
\end{table}

\stitle{Time Decomposition.} Table~\ref{tab:exp ablation time study} presents execution times for GRM 4G 1D and 50G 64D over 100 training steps, detailing embedding lookup, forward, and backward phases. The results show that \method achieves faster execution than TorchRec across all phases. The efficiency improvements from sequence balancing increase substantially with model complexity. Larger embedding dimensions induce significant increases in the lookup and backward phases due to communication overhead and sparse parameter update latency.

\begin{figure}[t]
    \centering
    \includegraphics[width=\linewidth]{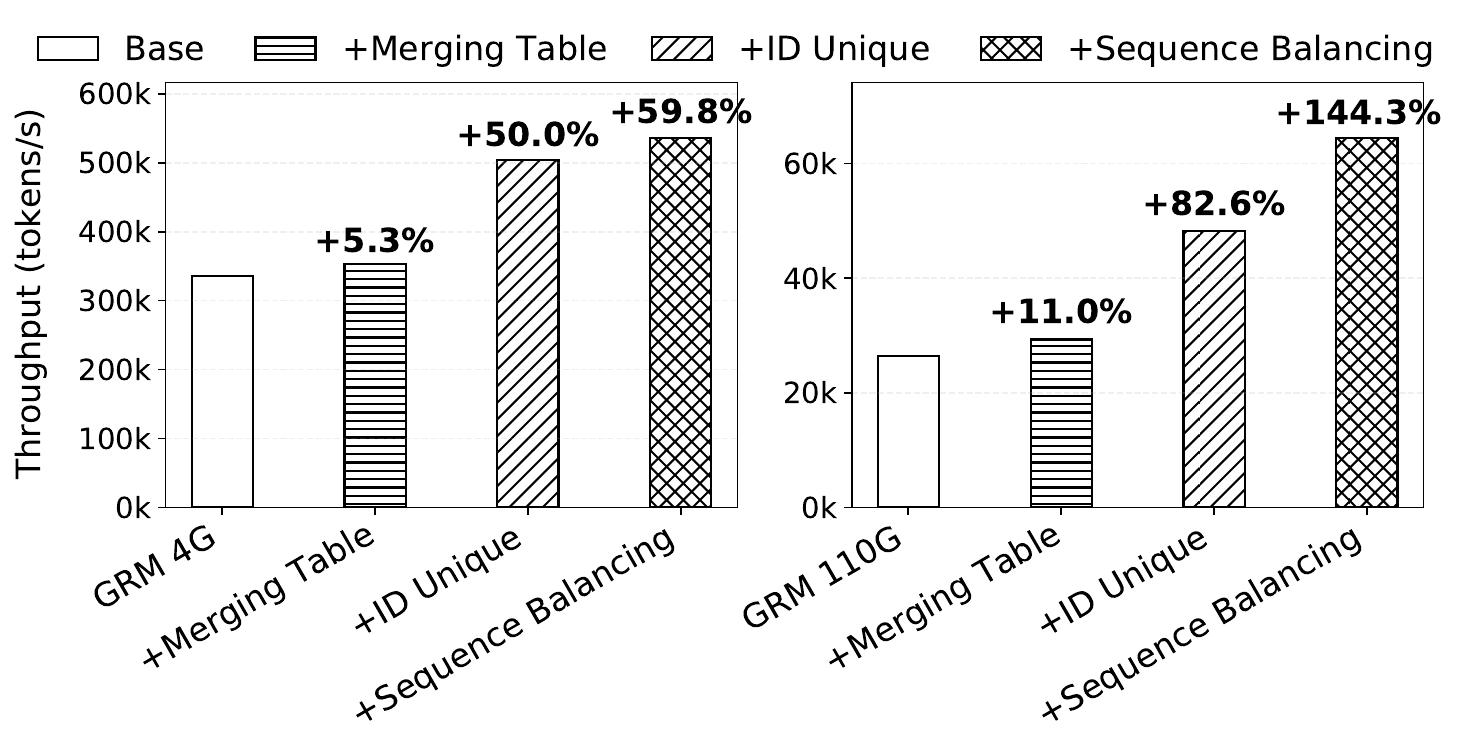}
    \caption{Ablation studies for GRM 4G 1D and GRM 110G 1D in \method. Compared with the baseline, \method achieves a 1.60 $\times$ to 2.44 $\times$ throughput improvement.}
    \label{fig:exp ablation study}
\end{figure}

\subsection{Evaluation of Individual Designs}

\stitle{Ablation Study.} Figure~\ref{fig:exp ablation study} presents an ablation study under 4G and 10G complexity settings by incrementally integrating different techniques in \method, including merging table, two-stage ID deduplication and sequence balancing.
The experimental results demonstrate all these techniques in \method can enhance system efficiency. Furthermore, we observe that performance gains are more pronounced with higher computational complexity. This trend reveals a positive correlation between \method’s efficiency improvements and computational complexity.

\begin{table}[t]
\centering
\caption{We report the comparison results in batch size and average GPU memory utilization before and after enabling sequence balancing. Numbers in parentheses denote improvements in GPU memory utilization. }
\scalebox{0.9}{
    \begin{tabular}{lrr}
\toprule
\textbf{Model} & \textbf{Batch size} & \textbf{GPU memory utilization} \\
\midrule
GRM 4G 1D       & 480 → 496             & 86.3\% → 95.7\% (+9.4)                  \\
GRM 110G 1D    & 80 → 116              & 75.3\% → 90.3\% (+15.0)    \\
\bottomrule
\end{tabular}
}
\label{tab: seq balance batch size}
\end{table}

\begin{figure}[t]
    \centering
    \includegraphics[scale=0.6]{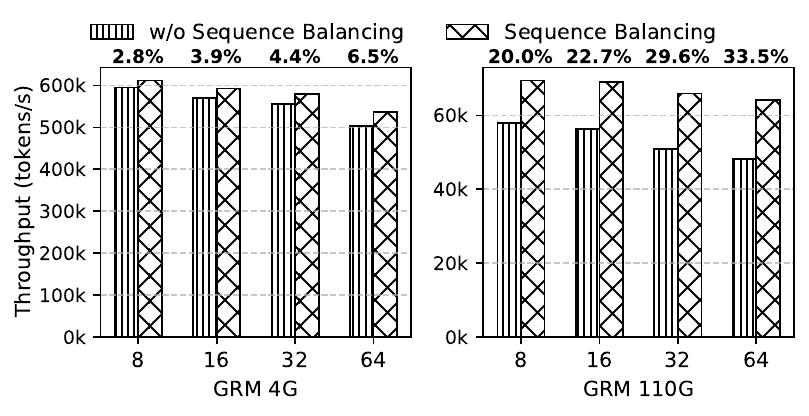}
    \caption{Throughput comparison for disabling and enabling sequence balancing on GRM 4G 1D and 110G 1D. We scale the system from 8 GPUs to 64 GPUs and report the gain values of sequence balancing at the top of figure.}
    \label{fig:exp throughput for balance}
\end{figure}

\stitle{Sequence Balancing Evaluation.} We evaluate sequence balancing on GRM 4G and 110G models, scaling from 8 to 64 GPUs. To ensure fairness, batch sizes are adjusted to maintain near-full GPU memory utilization under both enabled and disabled conditions. 
Figure~\ref{fig:exp throughput for balance} presents the throughput results, and table~\ref{tab: seq balance batch size} reports batch sizes and average GPU memory utilization, yielding four key observations. 
(1) Sequence balancing significantly improves throughput across all settings (e.g., an average improvements of 26.5\% for GRM 110G, peaking at 33.5\% on 64 GPUs). (2) Throughput gains scale with GPU count due to reduced synchronization delay from straggler devices. As the number of GPUs increases, the likelihood of encountering long sequences that induce computational delays also rises. (3) Throughput improvements correlate positively with computational complexity. Since sequence processing FLOPs scale quadratically with hidden embedding dimensions, higher computational complexity amplifies load imbalance, which makes sequence balancing increasingly critical. 
(4) Table~\ref{tab: seq balance batch size} shows that sequence balancing maximizes GPU memory utilization, whereas fixed-size batching requires conservative sizing to avoid OOM errors.

\begin{figure}[t]
    \centering
    \includegraphics[scale=0.4]{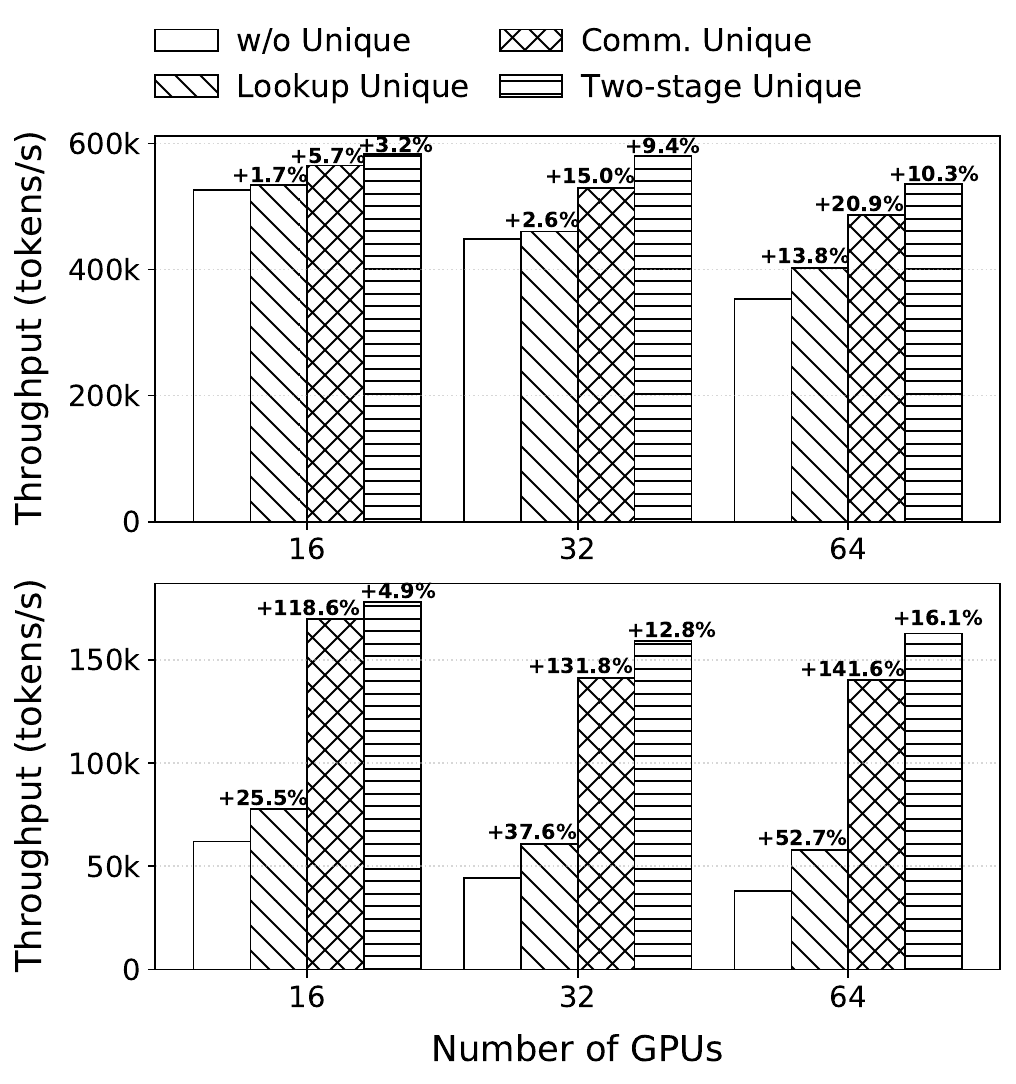}
    \caption{Throughput comparisons of two-stage ID deduplication for GRM 4G 1D (upper) and GRM 4G 64D (bottom). The improvements of the current strategy compared to the previous one are reported above the bars.}
    \label{fig:exp unique}
\end{figure}

\stitle{The Effect of Two-stage ID Deduplication.} To evaluate the impact of two-stage ID deduplication on system efficiency, we conduct experiments at 4 GFLOPs complexity with 1D and 64D embedding dimensions. We employ four strategies: (a) \textit{w/o unique} (no deduplication), (b) \textit{Comm. unique} (first-stage deduplication only), (c) \textit{Lookup unique} (second-stage deduplication only), and (d) \textit{Two-stage unique} (both stages). Starting with 16 GPUs, we incrementally increase the GPU count and report throughput results in Figure~\ref{fig:exp unique}. We get the following observations.
(1) The two-stage approach improves throughput by 1.1–3.7$\times$ over the baseline (a), highlighting its effectiveness in reducing ID communication and embedding communication. (2) ``Comm. unique'' outperforms ``Lookup unique'' because embedding communication is the primary source of latency, while hash table lookups are inherently fast. (3) The benefits of deduplication increase with embedding dimensions: higher-dimensional embeddings exacerbate network bandwidth waste, making ID deduplication more crucial for reducing communication latency.

\begin{table}[t]

\centering
\caption{Throughput comparison for MCH and \method. \textit{OOM} denotes out of memory and \textit{Gain} denotes the throughput improvement. ``-'' indicates that the comparison cannot be made due to OOM problems.}
\vspace{0.2cm}
\scalebox{0.85}{
\begin{tabular}{ccrrr}
\toprule
\textbf{Complexity}  & \textbf{Dim. Factor} & \textbf{MCH} & \textbf{\method} & \textbf{Gain} \\
\midrule
\multirow{3}{*}{4G}   & 1D                    & 392,731      & 579,649           & 47.59\%       \\
                     & 8D                    & 260,590      & 438,190           & 68.2\%       \\
                     & 64D                   & 80,857       & 155,832           & 92.7\%       \\
                     \midrule
\multirow{3}{*}{110G} & 1D                    & 44,719       & 68,993            & 54.28\%       \\
                     & 8D                   & 28,329       & 55,929            & 97.4\%       \\
                     & 64D                   & OOM          & 38,575            & -            \\
                     \bottomrule
\end{tabular}
}

\label{tab: exp mch vs. hash table}
\end{table}

\begin{figure}[t]
    \centering
    \includegraphics[width=\linewidth]{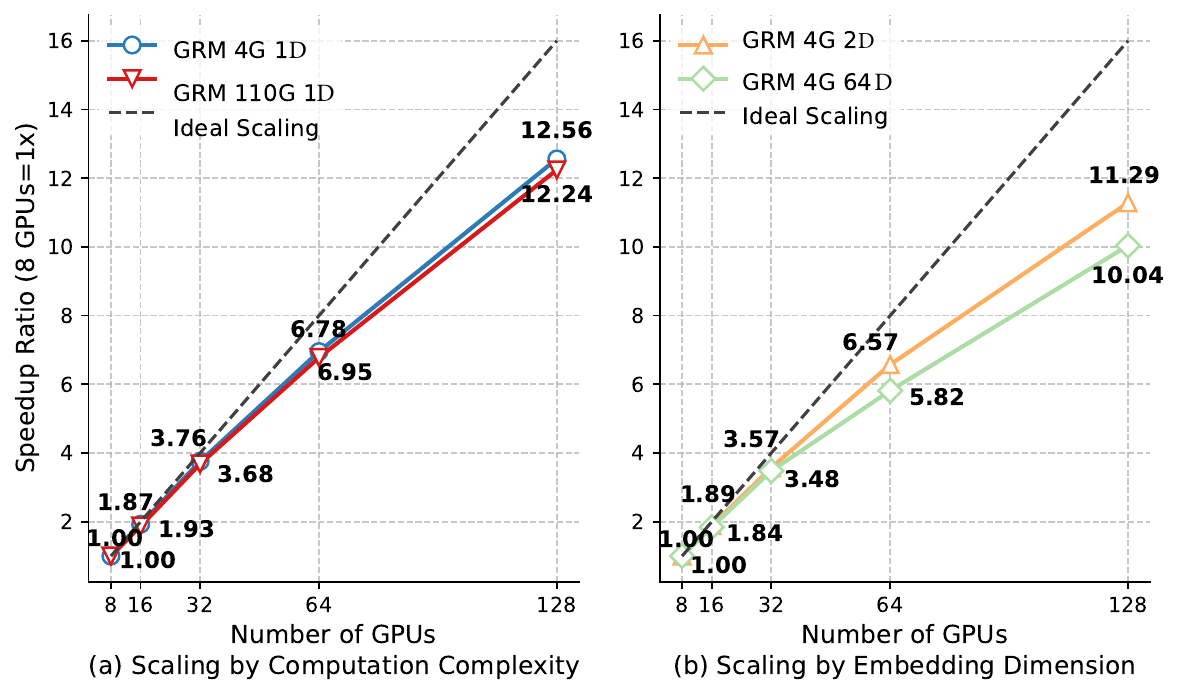}
    \caption{Comparison of scalability by computational complexity and embedding dimension. We use 8 GPUs configuration as the baseline for speedup ratio. The specific speedup ratio values are marked near the broken line.} 
    \label{fig:exp scaling}
\end{figure}

\stitle{The Effect of Dynamic Embedding Table.} We evaluate the proposed dynamic embedding table's effectiveness with TorchRec's Managed Collision Handling~\cite{ivchenko2022torchrec} (MCH) as the baseline. MCH remaps changeable IDs into a continuous space via a fixed-size mapping table. It employs binary search for ID localization and invokes eviction when a threshold is met. 
Experimental results are shown in Table~\ref{tab: exp mch vs. hash table}. Our findings reveal that the dynamic embedding table achieves 1.47–2.22$\times$ throughput improvement over MCH. This improvement stems from the grouped parallel probing technique, which accelerates hash collision resolution via multi-threaded parallel probing. Additionally, MCH encounters OOM issues at larger embedding dimensions due to pre-allocated memory for all tables. In contrast, the dynamic embedding table requires substantially less initial memory and allocates additional resources only when thresholds are exceeded.

\subsection{Scalability} 
We assess the system's scalability by examining deviation between actual speedup and ideal linear scaling for two factors: computational complexity and embedding dimensions. We fix either dimension or complexity while scaling the other, and use 8 GPUs as baseline up to 128 GPUs. The experimental results in Figure~\ref{fig:exp scaling} reveal three key findings.
(1) All systems exhibit sublinear scaling with increasing GPUs due to communication overhead. However, \method achieves 62.75\%–78.5\% of the ideal speedup at 128 GPUs, demonstrating its robust scalability. (2) Scaling computational complexity by 27.5$\times$ (4G 1D vs. 110G 1D) and embedding dimensions by 32$\times$ (4G 2D vs. 4G 64D) causes minor speedup degradation. The stability arises from sequence balancing, which mitigates load imbalance, and table merging and two-stage ID deduplication also reduce the lookup and communication latency. (3) Embedding dimensions impact speedup more significantly than computational complexity, as latency is dominated by sparse embedding communication, computation, and updates.

\section{Related Work}
\stitle{Systems for DRM Training.} Deep Recommendation Models (DRMs) combine sparse embeddings and dense neural networks to model each user-item individually. As DRM is widely used, many systems are developed to accelerate it~\cite{lian2022persia,jiang2019xdl,xie2020kraken,miao2021het,jiang2018linear,feng2020atbrg,fan2019mobius,zhao2020distributed,guo2021scalefreectr,smelyanskiy2019zion,krishna2020accelerating,wang2022merlin}. Specifically, Persia~\cite{lian2022persia} decouples  sparse embedding computation from dense model processing, training the embeddings asynchronously to improve throughput and the dense models synchronously  to ensure accuracy. PICASSO~\cite{zhang2022picasso} leverages the patterns of model architecture and data distribution to accelerate execution with packing, interleaving, and caching optimizations. XDL~\cite{jiang2019xdl} compresses the pair-wise communication among GPUs into a tree structure to accelerate embedding lookup and update. 
ScaleFreeCTR~\cite{guo2021scalefreectr} leverages host memory to store massive embedding tables and employs GPU-optimized synchronization to update the embeddings. Zion~\cite{smelyanskiy2019zion} and RecSpeed~\cite{krishna2020accelerating} mitigate the I/O bottlenecks for embedding communication by deploying additional NICs. 
However, these systems are typically designed based on specific assumptions about the access patterns of sparse embeddings (e.g., skewed towards frequently accessed embeddings), which limits their applicability compared to our dynamic embedding table. Additionally, these systems are implemented using TensorFlow, making them unsuitable for PyTorch-based GRM training.

\stitle{Systems for Transformer.} To train Transformer  efficiently, the focuses are attention computation and parallel strategies~\cite{ren2021zero,wang2023zero++,dao2022flashattention,lu2017flexflow,jia2018exploring,huang2019gpipe,shoeybi2019megatron,rasley2020deepspeed,li2024demystifying,wu2024bitpipe,rajbhandari2020zero}. FlashAttention~\cite{dao2022flashattention} reduces the accesses to slow GPU global memory with tiled computation and online recomputation fusion. 
FlexFlow~\cite{lu2017flexflow} and OptCNN~\cite{jia2018exploring} automatically search the optimal parallel strategy for model training by distributing model computation and parameters among the GPUs in different ways. GPipe~\cite{huang2019gpipe} employs pipeline parallelism by splitting each mini-batches into micro-batches to interleave the computation and communication across the micro-batches. Megatron-LM~\cite{shoeybi2019megatron} and DeepSpeed~\cite{rasley2020deepspeed} propose hybrid parallel strategies for large-scale model training by partitioning the model and data across the GPUs.
Hydraulis~\cite{li2024demystifying} adopts dynamic parallelism for each iteration according to the sequence length distribution to achieve workload balance. \method enjoys existing attention kernel optimizations for efficient computation but simple data parallel suffices for GRMs since their dense models are usually small. Moreover, \method optimizes the processing of sparse embeddings, which are not present in most Transformer-based models.     

\section{Conclusion}
We propose \method, a distributed training system designed to enhance scalability and efficiency of industrial generative recommendation models (GRMs). Our sparse model optimizations include dynamic embedding tables for changeable IDs, automated table merging, and two-stage ID deduplication. For dense models, dynamic sequence batching mitigates workload imbalances. The experiment results demonstrate substantially improved training efficiency without compromising accuracy, paving the way for broader adoption of GRMs in recommendation systems.

\begin{acks}
This work was sponsored by National Natural Science Foundation of China (62472327), Sichuan Clinical Research Center for Imaging Medicine (YXYX2402), Geological Hazard Prevention and Control Project of the Three Gorges Follow-up Work of the National Major Water
Conservancy Project Construction Fund (0001212024CC60003), and the Innovative Research Group Project of Hubei Province (2024AFA017). This work was supported by Meituan through the Efficient Training and Inference System for Generative Recommendation Models project (MT20250112053P).
\end{acks}

\bibliographystyle{ACM-Reference-Format}
\bibliography{reference}

\appendix

\section{The Differences between GRM and DRM} \label{apd:different from GRM and DRM}
Our model builds upon GR~\cite{zhai2024actions} by introducing a novel prediction head enhanced through MMoE. This design replaces GR’s MLP with MMoE, which employs task-specific gate networks to dynamically balance multiple objectives, improving accuracy in multi-task scenarios. Compared to MLP’s parameter-intensive tuning, MMoE reduces computational overhead by selectively activating only the top-k expert outputs. 
Additionally, our \method diverges from DRM in batch construction. As illustrated in Figure~\ref{fig:batch}, DLRM’s pairwise batches contain redundant user computations (e.g., "User 1" features are processed twice). In contrast, \method employs a sequence-wise approach that consolidates multiple user interactions into a single sample, which ensures each user feature is computed once. Notably, user features contain orders of magnitude more tokens than item features (e.g., 10,000 vs. 100), enabling \method to significantly accelerate training and improve scalability.

\begin{figure}[t]
    \centering
    \includegraphics[width=\linewidth]{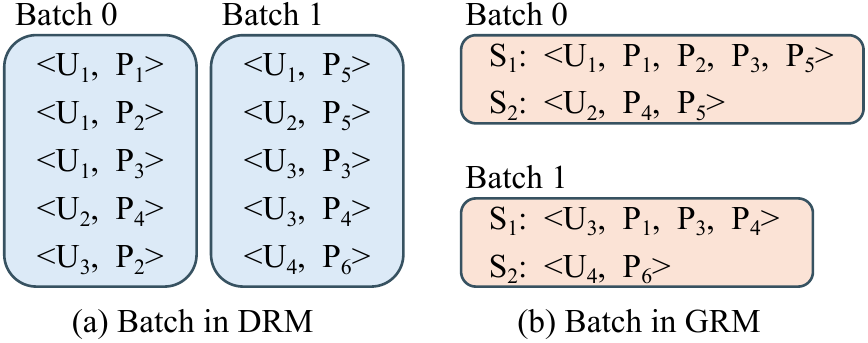}
    \vspace{-0.6cm}
    \caption{Batch structure of DRM and GRM. ``\textit{U}'' denotes users, ``\textit{P}'' denotes products. DRM and GRM use a pairwise and sequence-wise batch construction method, respectively.}
    \label{fig:batch}
\end{figure}

\begin{figure}[t]
    \centering
    \includegraphics[scale=0.5]{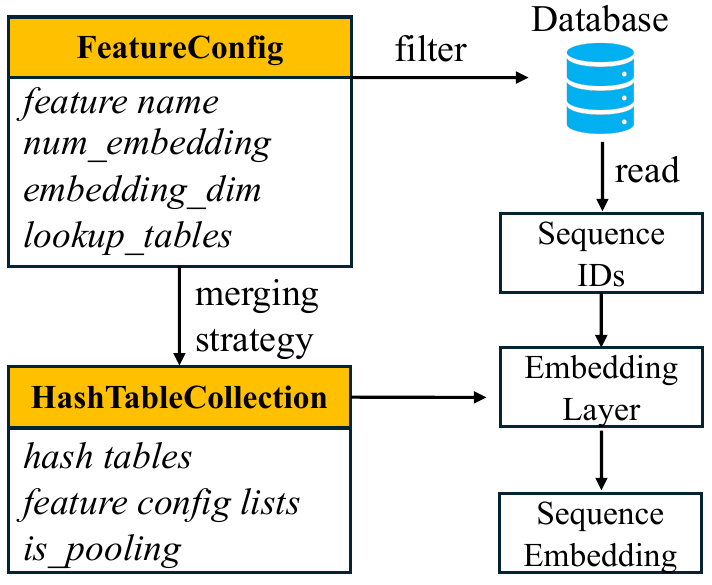}
    \vspace{-0.3cm}
    \caption{We achieve automatic table merging by two embedding configuration interfaces.}
    \vspace{-0.6cm}
    \label{fig:table merging interface}
\end{figure}

\section{Additionally Details for \method}
\subsection{Pipeline.}
\label{adx:pipeline}
We maximize parallelism through pipeline technology using three streams: \textit{copy}, \textit{dispatch}, and \textit{compute}. Specifically, the copy stream loads sequences from CPU to GPU, the dispatch stream performs table lookups based on IDs, and the compute stream handles forward computation and backward updates. 

\subsection{Proof}
\label{app: prrof}
We use coprime conditions and the characteristics of moduli to prove that our grouped parallel probing technique can traverse all slots in the hash table.
\begin{lemma}
    If $M=2^n$ and $S$ is odd, then $\mathrm{gcd}(S, M)=1 $. $\mathrm{gcd}$ denotes the greatest common divisor.
\end{lemma}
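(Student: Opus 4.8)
The plan is to argue directly from the prime factorization of $M$. Since $M=2^n$, its only prime divisor is $2$, and therefore every positive divisor of $M$ is itself a power of two. I would begin by setting $d=\mathrm{gcd}(S,M)$ and recording the structural fact that, as a divisor of $M=2^n$, we must have $d=2^k$ for some integer $0\le k\le n$. This reduction is the only place where the special form $M=2^n$ is used, and it follows immediately from uniqueness of prime factorization (equivalently, the divisors of $2^n$ are exactly $2^0,2^1,\dots,2^n$).

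Next, I would bring in the hypothesis that $S$ is odd. Because $d\mid S$ and any divisor of an odd number is odd, the common divisor $d$ must be odd. Combining $d=2^k$ with oddness forces $k=0$, since $2^k$ is odd precisely when $k=0$. Hence $d=2^0=1$, which is exactly the asserted conclusion $\mathrm{gcd}(S,M)=1$.

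I expect no genuine obstacle in this lemma; it is elementary and the only step warranting explicit justification is the claim that every divisor of $2^n$ is a power of two. As an alternative phrasing I could invoke the standard characterization of coprimality, namely that two integers are coprime if and only if they share no common prime factor: since $2$ is the sole prime factor of $M$ and $2\nmid S$, the two integers share no prime factor and are therefore coprime. Either route delivers $\mathrm{gcd}(S,M)=1$ directly, and this lemma is precisely the coprimality input required for the subsequent argument that the probe sequence covers all $M$ slots in the Theorem.
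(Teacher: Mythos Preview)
Your proposal is correct and essentially matches the paper's proof, which simply observes that $S$ has no factor of $2$ while $M=2^n$ has only $2$ as a prime factor, so they share no common prime and $\mathrm{gcd}(S,M)=1$. Your alternative phrasing via ``no common prime factor'' is exactly the paper's argument, and your first route through $d=2^k$ is just a slightly more explicit rendering of the same idea.
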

\begin{proof}
    $S$ contains no factor of 2 (since $S$ is odd), while $M$ has prime factorization $2^n$. Thus, $S$ and $M$ share no common prime factors, proving $\mathrm{gcd}(S, M)=1$.
\end{proof}
\begin{proof}
    Assume there exist $t_1,t_2\!\in \![0,M\!-\! 1]$, $t_1 \! \neq \! t_2$ such that:
    \begin{equation}
        h_{t_1} \equiv h_{t_2},
    \end{equation}
    This implies:
    \begin{equation}
        (t_1-t_2)\cdot S \equiv 0.
    \end{equation}
    By Lemma 1, $\mathrm{gcd}(S, M)=1$. According to Bézout's identity, $S$ has a multiplicative inverse modulo $M$. Multiplying both sides by this inverse yields:
    \begin{equation}
        t_1-t_2\equiv 0 .
    \end{equation}
    Since $t_1,t_2 \! \in \! [0, M\!-\! 1]$, the only solution is $t_1\!=\!t_2$, contradicting the initial assumption. Hence, all probe positions are unique.
    
\end{proof}

\begin{proof}
    By the pigeonhole principle, $M$ probes generate $M$ unique positions, exactly covering all slots:
    \begin{equation}
        \{h_{t}\}_{t=0}^{M-1}=\{ 0,1,2,...,M-1\}.
    \end{equation}
\end{proof}

\subsection{Interface for Automated Table Merging}
\label{apd:table merging interface}
Figure~\ref{fig:table merging interface} illustrates our unified feature configuration interface, \texttt{FeatureConfig}, which enables automatic table merging by defining a configuration entry for each feature based on its \textit{name}, \textit{embedding vocabulary size}, \textit{embedding dimension}, and \textit{lookup table}. Subsequently, \method reads data from the database according to feature names and generates a table merging strategy (e.g., merging tables with identical embedding dimensions). To support the merging of dynamic embedding tables, we design \texttt{HashtableCollection}, which generates a collection of hash tables based on the embedding configurations (e.g., \textit{feature name, embedding dimension}) within the \texttt{FeatureConfig} list and performs pooling operations if required. Finally, \method retrieves the corresponding embeddings for sequence IDs from the embedding layer. 

\subsection{Running Example for Dynamic Offset} \label{apd:table offset}
As shown in Figure~\ref{fig: table offset} (a), a unique row offset value is assigned to each embedding table (i.e., offset 100 for table 2), calculated by summing the number of rows from all preceding embedding tables. In contrast, as illustrated in Figure~\ref{fig: table offset} (b), we use 2 identifier bits to distinguish between three tables, while calculating the maximum row capacity of hash table as $2^{61}$ through utilization of the remaining 61 bits. 

\begin{figure}[t]
    \centering
    \includegraphics[scale=0.5]{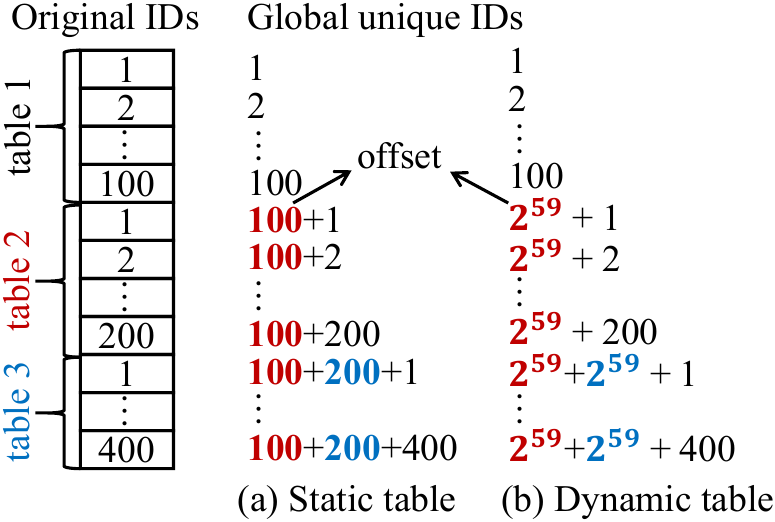}
    \vspace{-0.4cm}
    \caption{An example of offsets in static table and hash table.}
    \label{fig: table offset}
\end{figure}

\begin{algorithm}[t]
\caption{Dynamic Sequence Batching}
\SetAlgoLined
\DontPrintSemicolon
\KwIn{Target token count $N$, input chunks $\mathcal{C}$}

$\mathcal{Q} \gets \emptyset$, $C_i\in \mathcal{C}$, $i=0$\;
\While{$C_i \ne \emptyset$}{ 
    \If{$\texttt{sum}(\mathcal{Q})<N$}{
        $\mathcal{Q} \gets$ add all sequence in $C_i$\;
        $i++$\;
        }
    $S \gets$ cumulative sum of sequence token count in $\mathcal{Q}$\;
    $k \gets$ binary search in $S$ for closest value to $N$\;
    \If{$k<\texttt{len}(\mathcal{Q})$}{
        $\mathcal{B} \gets$ \texttt{pop}$(\mathcal{Q}[:k])$\;
        \textbf{yield} $\mathcal{B}$\;
    }
}
\KwOut{Balanced batch $\mathcal{B}$}
\label{alg: seq balance}
\vspace{-0.1cm}
\end{algorithm}

\subsection{Algorithm for Sequence Balancing} \label{apd:batching algorithm}
Algorithm~\ref{alg: seq balance} depicts \method's dynamic sequence batching. Each GPU maintains a sequence buffer $\mathcal{Q}$ populated from hive table chunks $C_i \in \mathcal{C}$. The target token count $N$ equals the batch size multiplied by the average sequence length. Token counts are aggregated into a cumulative sum $S$ over samples in $\mathcal{Q}$. A binary search finds the partition index $k$ where $S(k)$ best approximates $N$. The first $k$ sequences in $\mathcal{Q}$ form the output batch $\mathcal{B}$. Under-filled buffers accumulate samples until processing exhausts all hive chunk data.

\section{Implementation Details for \method} 
\label{apd:implementations}
\stitle{Checkpoint Resuming.} The core objective of checkpoint is to achieve persistent model preservation during training, enabling rapid recovery to previous training states. \method not only supports conventional checkpoint resuming but also facilitates preservation and loading in dynamic distributed environments (e.g., saving on 8 GPUs but loading on 16 GPUs). Existing systems address this challenge by saving model parameters from all devices in a single checkpoint, then redistributing parameters based on new device quantities during loading. However, this approach suffers from a critical flaw: each device must scan the entire checkpoint, resulting in redundant read operations. In contrast, \method implements a novel approach where each device independently preserves its own checkpoint. During loading, new devices locate required checkpoint files through modulo operations. For instance, when loading checkpoints saved from 8 GPUs onto 16 GPUs, both GPU 0 and GPU 8 load parameters from the checkpoint saved on the original GPU 0.  This design is grounded in the insight that distributed cluster scaling typically follows powers of two.

\stitle{Mixed Precision Training.} During training, we reduce computational resource consumption by converting dense models from FP32 to FP16 precision. For sparse embeddings, we dynamically partition hot feature sets based on access frequency. Specifically, for high-frequency accessed feature embeddings, we preserve embedding vectors in FP32 format to avoid quantization accumulation errors caused by frequent gradient updates. Conversely, low-frequency features employ FP16 storage and computation, significantly reducing memory footprint while accelerating table lookup operations.

\stitle{Gradient Accumulation.} \method enhances training stability by accumulating gradients across multiple training batches to mitigate large gradient variances caused by small batch sizes. For sparse embedding parameters, we first record activated embedding IDs and their corresponding gradient values within each batch. These gradients from identical IDs across multiple batches are accumulated and then updated collectively. Furthermore, we avoid full parameter updates for sparse embeddings, instead selectively updating only activated parts. This design simultaneously reduces memory waste and improves update efficiency. For smaller dense models, we also implement gradient accumulation followed by full parameter updates.

\stitle{Operator Fusion.} Inspired by Flash Attention, we implement customized operator fusion for the critical HSTU module in forward computation. Specifically, we partition the U, Q, K, and V matrices in Figure~\ref{fig:MTGR model} into tiles and process them sequentially in SRAM. Crucially, we use casual mask vectors to reduce unnecessary calculations by dynamically determining token skipping.

\begin{table}[t]
\caption{Dataset Statistics}
\vspace{-0.2cm}
\centering
\scalebox{0.85}{
\begin{tabular}{lccccc}
\toprule
Dataset & \#Users & \#Items & \#Exposure & \#Click & \#Purchases \\
\midrule
Train 
& 0.21 billion
& 4,302,391 
& 23.74 billion
& 1.08 billion
& 0.18 billion
\\
Test & 3,021,198 & 3,141,997 & 76,855,608 & 4,545,386 & 769,534 \\
\bottomrule
\end{tabular}
}
\label{tab:dataset}
\end{table}

\section{Experiment Details}
\subsection{Datasets} \label{apd:dataset}
We construct a training dataset based on logs from the real industrial-scale recommendation system in Meituan. The detailed dataset statistics are reported in Table~\ref{tab:dataset}. Unlike public datasets, our real dataset contains a richer cross features set and longer user behavior sequences.
In addition, the volume of our dataset is large, allowing complex models to achieve more adequate convergence during training.
For the offline experiments, we collect data over a $10$-day period. For the online experiments, in order to compare with the DLRM baseline that has been trained for over $2$ years, we constructed a longer-term dataset for the experiments, using data spanning more than $6$ months.

\begin{table}[t]
\setlength\tabcolsep{2pt}
\centering
\caption{Model hyperparameters in GRM. \textbf{Complexity} denotes computational complexity, and the unit is FLOPs.}
\scalebox{0.9}{
\begin{tabular}{lcccc}
\toprule
 & \textbf{Complexity} & \textbf{\# Emb. dim.} & \textbf{\# HSTU block} & \textbf{\# HSTU head} \\
\midrule
Small & 4G         & 512           & 3             & 2            \\
Large & 110G       & 1024          & 22            & 4           \\
\bottomrule
\end{tabular}
}
\label{tab:modle hyperparameters}
\end{table}

\subsection{Model Hyperparameters} \label{apd:model complexity}
We use two models: small and large GRM for all the experimental evaluations.
The detailed model hyperparameters are provided in Table~\ref{tab:modle hyperparameters}.
\clearpage

\end{document}